\documentclass[11pt]{article}

\usepackage{amsmath,amssymb,amsfonts }
\usepackage{amsthm}
\usepackage[backend=biber,style=numeric-comp,sorting=none]{biblatex}
\usepackage{ dsfont }
\usepackage[utf8]{inputenc}
\usepackage[T1]{fontenc}

\usepackage[margin=1in]{geometry} 
\usepackage{titling}
\usepackage{xcolor}
\usepackage{hyperref}

\newcommand{\ket}[1]{\left| #1 \right\rangle}
\newcommand{\bra}[1]{\left\langle #1 \right|}

\newcommand{\id}{\mathbb{I}}
\newcommand{\Tr}{\mathrm{Tr}}

\hypersetup{
    colorlinks=false,
    pdfborder={0 0 1.5},
    linkbordercolor={1 0 0},   
    citebordercolor={0 1 0},   
    urlbordercolor={0 0 1}, 
    pdftitle    = {Comment on "Quantum theory based on real numbers cannot be experimentally falsified": On the compatibility of physical principles with information theory for fermions},
    pdfauthor   = {Fatemeh Moradi Kalarde; Xiangling Xu; Marc-Olivier Renou},
    pdfsubject  = {Quantum information theory; fermionic information theory; real quantum theory},
    pdfkeywords = {fermionic information theory, real quantum theory, operational independence, independent preparation, composite systems, superselection rules, local tomography, generalized probabilistic theories},
    pdflang     = {en},
    pdfcreator  = {LaTeX with hyperref},
    pdfdisplaydoctitle = true
}

\title{\textbf{Comment on ``Quantum theory based on real numbers cannot be experimentally falsified'': On the compatibility of physical principles with information theory for fermions}}
\author{Fatemeh Moradi Kalarde$^{1}$, Xiangling Xu$^{1}$ and Marc-Olivier Renou$^{1}$\\
\footnotesize{$^{1}$Inria Paris-Saclay, Bâtiment Alan Turing,}\\
\footnotesize{1 rue Honoré d'Estienne d'Orves, 91120 Palaiseau, France}}

\date{}

\newtheorem{postulate}{Postulate}
\newtheorem{proposition}{Proposition}
\newtheorem*{claim*}{Claim}

\begin{document}

\maketitle

\begin{abstract}
The manuscript \emph{arXiv:2603.19208}~\cite{arXiv2603} proposes a physically motivated postulate to select the appropriate formulation of quantum theory over real Hilbert spaces, ruling out the theory considered in \emph{Nature 600, 625-629 (2021)}~\cite{Renou} in favour of the alternative theory which reproduces the predictions of standard quantum information theory (QIT). 
Here, we first make the claim that a general physical postulate should in particular be satisfied by Fermionic Information Theory (FIT), the standard framework describing information encoded in the presence or absence of identical fermions. 
We then show that this postulate proposed by~\cite{arXiv2603} fails in FIT, hence is not a general physical postulate according to our claim. 
More broadly, our results highlight the importance of confronting proposed foundational principles with fermionic information theories, a point that also deserves further examination in recent related works such as \emph{arXiv:2503.17307}~\cite{arXiv2503} and \emph{arXiv:2504.02808}~\cite{arXiv2504}.
\end{abstract}

\section{Introduction}

The question of whether quantum theory fundamentally requires complex Hilbert spaces, or can instead be formulated over real Hilbert spaces, has attracted renewed interest in recent years~\cite{arXiv2603,Renou,arXiv2503,arXiv2504,Li2022}. When constructing a quantum theory over real Hilbert spaces, one encounters a fundamental ambiguity: how to represent composite systems and independent preparations. 
Several approaches were proposed, they all lead to a variant of the following two theories:
\begin{itemize}
    \item Theory $T_1^\mathds{R}$, which is obtained from the standard Quantum Information Theory (QIT) formalism based on complex Hilbert spaces, as given in standard textbooks such as~\cite{NielsenChuang}, but restricted to $\mathds{R}-$Hilbert spaces. It is given by~\cite[Def.~2, Eq.~(7)]{arXiv2603}: the set of independently preparable states is identified with the set of \emph{Kronecker tensor product states}. More precisely, in $T_1^\mathds{R}$, the set of states $\rho_{AB}$ that two distant experimentalists $A, B$ can locally and independently prepare without communicating is the set of all Kronecker tensor product states $\rho_A\otimes\rho_B$. It was shown in~\cite{Renou} that $T_1^\mathds{R}$ is experimentally falsifiable as it does not recover all the predictions of standard QIT\footnote{$T_1^\mathds{R}$ is called Real Quantum Theory (RQT) in~\cite{Renou}.}.
    \item Theory $T_2^{\mathds{R}}$, which admits several formulations that are operationally equivalent, each reproducing the same predictions as the standard QIT, and can be traced back to the work of Stueckelberg and its multipartite extensions~\cite{Stueckelberg, Wootters}.
    One formulation, which we focus on in this note, is given by~\cite[Def.~2, Eq.~(8)]{arXiv2603}: the set of independently preparable states is identified with the set of \emph{operationally independent states}. Concretely, this theory is constructed as a modification of $T_1^{\mathds{R}}$, in which the set of bipartite states $\rho_{AB}$ that two distant parties $A$ and $B$ can independently prepare is enlarged relative to that of $T_1^{\mathds{R}}$ (see~\cite[Thm.~1]{arXiv2603})\footnote{Our claim below also applies to other formulations, such as the theories called Real-Number Quantum Theory in~\cite{arXiv2603, arXiv2504}, the framework of~\cite{arXiv2503} based on the postulate that we briefly discuss in Appendix~\ref{App:OtherPostulates}, and the approach based on real K\"ahler spaces~\cite{volovich2025real,aref2025notes}.}.
\end{itemize}

A natural goal is to look for \emph{genuine physical principles} which enable one to derive or to select the appropriate quantum theory over real Hilbert spaces, between $T_1^\mathds{R}$ or $T_2^\mathds{R}$.
The manuscript~\cite{arXiv2603} follows this approach. Starting from $T_1^\mathds{R}$, they motivate that it should be modified into $T_2^\mathds{R}$ by including more states in the set of independently prepared state. 
Their argument can be summarized in the following Postulate they adopt to select the appropriate quantum theory over real Hilbert spaces:
\begin{postulate}[p.4 of~\cite{arXiv2603}]\label{Postulate:IndepPrepOpeIndep}
The only experimentally motivated assumption for defining independent preparation is operational independence: in a real formulation of quantum theory, independent preparation should coincide with operational independence.
Hence, the set of independently prepared states should coincide to the set of states for which all local measurements produce product probability distributions.
\end{postulate}

The authors show that this Postulate holds in $T_2^\mathds{R}$ but not in $T_1^\mathds{R}$. Hence, Postulate~\ref{Postulate:IndepPrepOpeIndep} selects $T_2^\mathds{R}$ (or, equivalently, suggests that $T_1^\mathds{R}$ should be replaced by $T_2^\mathds{R}$). Then, they deduce that the appropriate quantum theory over real Hilbert spaces is $T_2^\mathds{R}$, which has the same predictions as QIT, hence is not experimentally falsifiable (as long as QIT is correct).

While we agree that deriving structural features of quantum theory from physical principles is highly desirable, we show that Postulate~\ref{Postulate:IndepPrepOpeIndep} is not compatible with the following claim we make:
\begin{claim*}
As fermions exist, any "general physics postulate" should be shown valid within Fermionic Information Theory (FIT).
\end{claim*}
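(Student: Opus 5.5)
Since the Claim is a methodological statement rather than a formal theorem, the plan is to make it precise enough that it reduces to one instantiation step, and then defend each premise. First I will fix what is meant by a \emph{general physics postulate}: a statement phrased purely in operational terms (preparations, local measurements, outcome statistics, composition of spatially separated systems) and asserted to constrain \emph{every} physically realizable system, not only those chosen to build a particular Hilbert-space formalism. Postulate~\ref{Postulate:IndepPrepOpeIndep} is of this type. It speaks only of what distant experimentalists can independently prepare and of product distributions for local measurements. Under this reading, the Claim follows by universal instantiation once two premises are established: (i) fermionic systems are physically realizable; (ii) FIT is the correct operational description of information encoded in fermionic modes.

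Premise (i) is empirical, since electrons and other fermions exist and are used to store and process information. For premise (ii), I will recall the standard structure of FIT. Local modes are described by Majorana or creation/annihilation operators satisfying the canonical anticommutation relations. Physical observables and states obey the parity superselection rule. The composition of spatially separated regions is given by the fermionic (graded) tensor product, not the Kronecker product. I will stress that FIT is not an exotic alternative theory: it is exactly what second-quantized complex quantum theory predicts for fermionic modes once parity superselection, which is forced by relativistic locality and the impossibility of superposing states with different fermion number parity, is imposed. Hence anyone who accepts standard QIT as empirically correct must accept FIT as the correct description of the fermionic sector.

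The final step, which I expect to be the main obstacle, is to show that the operational notions appearing in a postulate such as Postulate~\ref{Postulate:IndepPrepOpeIndep} are well defined inside FIT, so that ``valid within FIT'' is a meaningful requirement rather than a category error. Concretely, I need to identify in FIT three things:
\begin{itemize}
\item the local measurements of each party, namely the parity-preserving effects on that party's modes;
\item the independently preparable states, namely those obtained by each party acting locally on the vacuum with no communication;
\item the notion of product statistics for local measurements.
\end{itemize}
I will check that these are formulated in the same operational language as in~\cite{arXiv2603}. My first approach is to phrase both QIT and FIT as operational (generalized probabilistic) theories with a specified rule for composing separated systems. The postulate then becomes a sentence in that shared language, and its truth value in FIT is determined. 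A possible objection is that a postulate might be intended only for ``ordinary'' distinguishable systems. I will answer it by noting that such a restriction would deprive the postulate of its status as a \emph{general} physical principle, which is exactly the status invoked in~\cite{arXiv2603} to select $T_2^\mathds{R}$ over $T_1^\mathds{R}$.
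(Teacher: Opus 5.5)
Your reading matches the paper, which gives no proof of the Claim. It simply asserts it as a methodological premise, grounded in the existence of fermions and in FIT being the established framework for fermionic information (Appendix~\ref{Sec:InfoTheories}); your universal-instantiation argument makes that same rationale explicit. The check you single out as the main obstacle, that the operational notions are well defined in FIT, is done implicitly in the paper in Sec.~\ref{FITCounterexample} (local effects commuting with local parity, independently prepared states as products of parity-respecting local states, product statistics) and in the GPT formulation of Appendix~\ref{App:LocalTomography}.
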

FIT is the well-established standard framework describing information carried by identical fermions~\cite{banuls2009entanglement,d2014feynman,d2014fermionic,roos2017revealing,Vidal2021}; see Appendix~\ref{Sec:InfoTheories} for a clarification on the terminology of QIT and FIT.
We provide in Sec.~\ref{FITCounterexample} a counterexample showing that FIT does not satisfy Postulate~\ref{Postulate:IndepPrepOpeIndep}.
Accepting our claim, this demonstrates that Postulate~\ref{Postulate:IndepPrepOpeIndep} cannot be regarded as a fully general physical principle: its use to deriving the structural features of a quantum theory over real numbers remains to be justified.

In Appendix~\ref{App:OtherPostulates}, we also discuss two recent works~\cite{arXiv2503, arXiv2504} that reconstruct other versions of $T_2^\mathds{R}$ based on an alternative postulate (cf. Postulate~\ref{Postulate:P4}).
We discuss this postulate in light of information-theoretical consideration for indistinguishable particles, and give indications that the compatibility of Postulate~\ref{Postulate:P4} with FIT deserves further investigation.

Our comment highlights the importance of confronting proposed general physical principles with the full range of known physical theories. Historically, principles that appeared natural from a classical perspective were later shown to be incompatible with quantum phenomena. This is famously reflected in Einstein's remark that ``God does not play dice'', and more concretely illustrated by the assumptions underlying the Einstein--Podolsky--Rosen argument~\cite{EPR}, which were later shown by Bell's theorem~\cite{Bell} to be incompatible with quantum predictions.
In a similar spirit, we believe that any proposed physics general postulate should be carefully examined in light of frameworks such as Fermionic Information Theory.

\section{Fermionic counterexample}\label{FITCounterexample}

As in the superselection-rule examples of Ref.~\cite{VerstraeteCirac2003}, the key point here is that the restricted class of physically allowed local operations need not distinguish a locally prepared state from a non-locally prepared one.
Fermionic systems are subject to the \emph{parity superselection rule}, which forbids coherent superpositions between states of even and odd fermionic number~\cite{Vidal2021} (which in particular avoid signalling due to fermionic anticommutation). As a consequence, both states and measurements are restricted.

Consider two fermionic modes $A$ and $B$, and the Bell states of even and odd parity
\begin{equation}
\ket{\phi^+} = \frac{1}{\sqrt{2}}(\ket{00} + \ket{11}), 
\qquad
\ket{\psi^+} = \frac{1}{\sqrt{2}}(\ket{01} + \ket{10}).
\end{equation}

Define the mixed state
\begin{equation}
    \rho_{AB} = \tfrac{1}{2}\bigl(\ket{\phi^+}\!\bra{\phi^+} + \ket{\psi^+}\!\bra{\psi^+}\bigr).
\end{equation}
This state is compatible with the parity superselection rule and is therefore a valid fermionic state.

As observed in~\cite[Sec.~IV.A]{Vidal2021}, not all operationally independent fermionic states (\cite[Def.~10]{Vidal2021}) can be realized as a product fermionic (i.e., independently prepared) states, and we now show that $\rho_{AB}$ is precisely such a state.

\paragraph{$\rho_{AB}$ is not independently prepared.}
In FIT, independently prepared systems correspond to product states of the form $\rho_A \otimes \rho_B$, where each local state must itself satisfy the parity superselection rule~\cite{Vidal2021}. 
The state $\rho_{AB}$ is separable, yet any separable decompositions gives rise to local states that violates the parity superselection rule~\cite[Ex.~1]{VerstraeteCirac2003}. Thus, $\rho_{AB}$ is not the tensor product of local states satisfying the parity superselection rule, and cannot be prepared locally and independently.

\paragraph{$\rho_{AB}$ is operational independent.}
We now show that $\rho_{AB}$ is operationally independent (within FIT). Let
\[
\Pi := \ket{0}\!\bra{0}-\ket{1}\!\bra{1}
\]
denote the local parity operator. In FIT, any allowed local effects $M_a$ on $A$ and $M_b$ on $B$ must commute with local parity, i.e. $[M_a,\Pi_A]=0$ and $[M_b,\Pi_B]=0$. Hence for any $x,y\in\{0,1\}$,
\[
\Tr\!\big[(M_a\otimes M_b)\rho_{AB}\big]
=
\Tr\!\big[(M_a\otimes M_b)((\Pi_A)^x\otimes (\Pi_B)^y)\rho_{AB}((\Pi_A)^x\otimes (\Pi_B)^y)\big],
\]
where we used $(\Pi)^0=(\Pi)^2=\id$. Averaging over $x,y$ yields
\[
\Tr\!\big[(M_a\otimes M_b)\rho_{AB}\big]
=
\Tr\!\big[(M_a\otimes M_b)\widetilde{\rho}_{AB}\big],
\]
where
\[
\widetilde{\rho}_{AB}
:=
\frac{1}{4}\sum_{x,y\in\{0,1\}}
((\Pi_A)^x\otimes (\Pi_B)^y)\rho_{AB}((\Pi_A)^x\otimes (\Pi_B)^y).
\]
A direct computation (using that local parity flips $\ket{\phi^+}\leftrightarrow\ket{\phi^-}$ and $\ket{\psi^+}\leftrightarrow\ket{\psi^-}$) shows that
\[
\widetilde{\rho}_{AB}=\frac{\id_{AB}}{4}.
\]
Hence,
\[
p(ab)=\Tr\!\big[(M_a\otimes M_b)\rho_{AB}\big]
=
\Tr\!\big[(M_a\otimes M_b)\tfrac{\id_{AB}}{4}\big]
=
p(a)p(b).
\]
Thus, $\rho_{AB}$ is operationally independent.

\paragraph{Contradiction.}
We conclude that FIT does not satisfy the Postulate~\ref{Postulate:IndepPrepOpeIndep}: the state $\rho_{AB}$ is operationally independent, yet it does not describe independently prepared systems. This shows that operational independence does not imply independent preparation in FIT.

\paragraph{Generalisation.}
Our argument relies on the same mechanism as in~\cite{VerstraeteCirac2003} (where its $\mathcal{N}(\rho)$ plays the role of as our $\tilde{\rho}_{AB}$): in the presence of a superselection rule, all allowed observables commute with the corresponding symmetry, so states related by the associated twirling operation are operationally indistinguishable. The example above is a realization of this mechanism in FIT. The same reasoning applies more broadly to theories with superselection rules, for instance to settings with charge, particle-number conservation, or anyonic information theories.

\section{Conclusion}

Deriving the structure of composite systems and the notion of independent preparations from physically motivated principles is a compelling program. However, we have shown that specifically Postulate~\ref{Postulate:IndepPrepOpeIndep} proposed in~\cite{arXiv2603} is not satisfied by FIT, a physically well-established framework~\cite{banuls2009entanglement,d2014feynman,d2014fermionic,roos2017revealing,Vidal2021}. Our counterexample relies on the same superselection rule mechanism as in~\cite{VerstraeteCirac2003}: the restricted set of physically allowed local operations may fail to distinguish a locally prepared state from a non-locally prepared one.
This demonstrates that operational independence cannot be regarded as a universal characterization of independent preparation. Hence, accepting our Claim, Postulate~\ref{Postulate:IndepPrepOpeIndep} cannot be regarded as a fully general physical principle.

Several recent works~\cite{arXiv2503,arXiv2504} follow a similar approach, proposing alternative $T_2^\mathds{R}$ to recover the standard QIT.
As we discuss in Appendix~\ref{App:OtherPostulates}, their proposed formulation is not immediate to reconcile with FIT and requires further justifications.

More generally, we believe that the intuition behind these proposed postulates is rooted in some properties of standard QIT, such as the commutation of distant operators, the absence of superselection rules on local operations, and local tomography\footnote{A theory is locally tomographic if any state of a composite system can be completely determined from the statistics of local measurements on its subsystems; see Appendix~\ref{App:LocalTomography} for the precise definition in the framework of Generalised Probabilistic Theories (GPTs). There, we also show that, within the GPT framework, local tomography is equivalent to the identification of operational and preparation independence (Props.~\ref{prop:LocalTomographyImpliesEquivalence} and~\ref{prop:EquivalenceImpliesLocalTomography}). Since several physically motivated theories, including FIT~\cite{d2014feynman,baldijao2026tomographically}, are not locally tomographic, this further indicates that Postulate~\ref{Postulate:IndepPrepOpeIndep} is not expected to hold beyond this setting.}. These properties, however, are not intuitive in the context of indistinguishable particles, and in particular in FIT. This provides insight why FIT violates Postulate~\ref{Postulate:IndepPrepOpeIndep}, and why Postulate~\ref{Postulate:P4} is not immediate in FIT. From this perspective, it is not surprising that $T_1^{\mathds{R}}$, a canonical toy example of a non-locally tomographic theory, fails to satisfy Postulate~\ref{Postulate:IndepPrepOpeIndep} and why Postulate~\ref{Postulate:P4} selects $T_2^{\mathds{R}}$ instead of $T_1^{\mathds{R}}$.

\section{Acknowledgements}
We thank Dagmar Bru\ss, Timothée Hoffreumon and Anton Trushechkin for the constructive discussions we had.

\printbibliography

\appendix

\section{On subsystem compositions and locality for indistinguishable particles}\label{App:OtherPostulates}
Two recent works~\cite{arXiv2503,arXiv2504} propose alternative notions of subsystem composition and locality, under which standard complex quantum information theory can be reconstructed from real Hilbert spaces.
In this appendix, we simply note that the applicability of these axiomatic notions is not immediate in the context of indistinguishable particles, an important class of physical information carriers.
In the same spirit as our Claim, separate justification for FIT is needed.

A representative formulation is the following adaptation of Postulate~(P4) from Ref.~\cite{arXiv2503}.

\begin{postulate}\label{Postulate:P4}
Let subsystems $A$ and $B$ be associated with Hilbert spaces $\mathcal H_A$ and $\mathcal H_B$, and let the composite system be represented on a Hilbert space $\mathcal H_{AB}$. Suppose that independently prepared states are embedded through a bilinear map
\[
\boxtimes : \mathcal H_A \times \mathcal H_B \to \mathcal H_{AB},
\qquad
(\ket{\psi}_A,\ket{\phi}_B)\mapsto \ket{\psi}_A \boxtimes \ket{\phi}_B.
\]
For each physical local operator $O_B \in \mathcal L(\mathcal H_B)$, there is a corresponding operator $(\id_A \boxtimes O_B)$ on $\mathcal H_{AB}$ such that
\begin{equation}\label{Eq:PostulateP4}
(\id_A \boxtimes O_B)(\ket{\psi}_A \boxtimes \ket{\phi}_B)
=
\ket{\psi}_A \boxtimes (O_B\ket{\phi}_B),
\end{equation}
and vice versa for physical local operator on $A$.
\end{postulate}

This formulation is natural in the usual framework of distinguishable subsystems\footnote{If the composition map $\boxtimes$ is bilinear, then by the universal property of the tensor product it induces a unique linear map from $\mathcal H_A \otimes \mathcal H_B$ to $\mathcal H_{AB}$. In this sense, the formalism remains closely tied to an underlying tensor-product structure. The same discussion extends to multipartite systems with multilinear $\boxtimes$.}.
However, for indistinguishable particles the status of such a subsystem picture is more delicate information-theoretically~\cite{wiseman2003entanglement,benatti2021entanglement,garrison2022quantum}.

Indeed, in the first-quantized, particle-based description of identical particles, particle labels do not define physical subsystems in the same way as for distinguishable systems.
For indistinguishable particles, one cannot in general interpret $O_A \otimes \id$ and $\id \otimes O_B$ as physical observables acting on “the first” or “the second” particle.
Since particle labels have no physical meaning for indistinguishable system, physically admissible observables must be invariant under particle exchange. 
Accordingly, an observable describing a local joint physical measurement on two identical particles need not factorize into a product of two independently physical one-particle observables~\cite[Sec.~2]{benatti2021entanglement}.

For this reason, in information-theoretic treatments of indistinguishable particles (such as FIT), the more natural framework is the mode-based, second-quantized one, where locality is associated with subsets of modes rather than labelled particles~\cite{Vidal2021}.
From this perspective, the applicability of Postulate~\ref{Postulate:P4} is not automatic: it should be formulated directly in the relevant indistinguishable-particle framework, that is second quantization, where its justification is not trivial\footnote{FIT formalises the second-quantized description of fermionic information, with locality defined at the level of modes. While the following computation does not contradict Postulate~\ref{Postulate:P4} (since the introduced $O_B$ is not a physical observable), it shows that extending the postulate within the second-quantized fermionic framework is nontrivial. 
\\
Consider two modes $A,B$, the states $\ket{\psi}_A=\ket{1}_A$, $\ket{\phi}_B=\ket{1}_B$, and the \emph{non-physical observable} $O_B:=f_B+f_B^\dagger$. Writing $\ket{1}_A\wedge\ket{1}_B = f_A^\dagger f_B^\dagger \ket{\Omega}$, where $\wedge$ denotes the antisymmetric fermionic (exterior) product and $\ket{\Omega}=\ket{0}_A\wedge\ket{0}_B$ is the global vacuum state, we compute
\vspace{-0.5 em}
\[
(f_B+f_B^\dagger)\, f_A^\dagger f_B^\dagger \ket{\Omega}
= - f_A^\dagger (f_B f_B^\dagger + f_B^\dagger f_B^\dagger)\ket{\Omega}
= - f_A^\dagger \ket{\Omega}
= -\ket{10}_{AB},
\]
using the canonical anticommutation relations. On the other hand,
\[
\ket{1}_A \wedge (f_B+f_B^\dagger)\ket{1}_B
= \ket{1}_A \wedge \ket{0}_B
= \ket{10}_{AB}.
\]
Thus $(\id_A\boxtimes O_B)(\ket{\psi}_A\boxtimes\ket{\phi}_B)\neq \ket{\psi}_A\boxtimes(O_B\ket{\phi}_B)$.}.
We therefore do not claim here a contradiction with Postulate~\ref{Postulate:P4}, but rather that its extension to FIT is nontrivial and cannot be taken for granted.

\section{Information theories}\label{Sec:InfoTheories}

In this note we distinguish between two kinds of information theories: Quantum Information Theory (QIT) and Fermionic Information Theory (FIT). Fermionic systems are, of course, quantum systems and are fully described within quantum theory~\cite[Sec.~6]{volovich2002seven}. Therefore, the distinction we draw is not between two disjoint physical theories, but rather between two different \emph{information-theoretic formalisms}, each based on a different choice of physical carriers.

By \emph{Quantum Information Theory} (QIT), we refer to the standard formalism of quantum information, as presented for instance in the textbook of Nielsen and Chuang~\cite{NielsenChuang}. 
In this framework, information is carried by distinguishable subsystems and encoded in physical degrees of freedom such as spin, polarization, or vibrational and electronic excitations. 
For instance, spin degrees of freedom associated with indistinguishable fermions can be treated within QIT as they are localized in distinct modes: each mode defines a distinguishable subsystem. 
Composite systems and system independence are described via the standard tensor product structure of Hilbert spaces.

For information theories based on indistinguishable particles, the information is now encoded in the presence or absence of the particles rather than in the internal degrees of freedom. One may then consider either bosonic or fermionic particles as the underlying physical systems.
In the bosonic case, the resulting information-theoretic framework is essentially equivalent to standard Quantum Information Theory, albeit with infinite-dimensional systems.
In contrast, the fermionic case exhibits genuinely new structural features. 

Hence, by \emph{Fermionic Information Theory} (FIT), we mean the formalism in which subsystems are fermionic modes, information is encoded in their occupation (i.e., the presence or absence of identical fermions), and local operations are described directly in terms of fermionic canonical anticommutation relations, as in~\cite{Vidal2021}.

A key distinguishing information-theoretic feature of FIT is its notion of locality: standard fermion-to-qubit encodings, such as the Jordan--Wigner transformation, typically map local fermionic observables to nonlocal qubit operators. In fact, exact fully local encodings are only possible in highly restricted geometries~\cite{Guaita2025localityofqubit}. Another fundamental difference is that FIT, unlike QIT, is not locally tomographic~\cite{d2014fermionic,d2014feynman}; see Appendix \ref{App:LocalTomography} for a formal definition. The correct notion of FIT entanglement is also more delicate to conceptualize~\cite{wiseman2003entanglement,banuls2007entanglement}.

\section{Postulate \ref{Postulate:IndepPrepOpeIndep} is equivalent to local tomography}\label{App:LocalTomography}

An information theory is said to be \emph{locally tomographic} if any state of a composite system is fully characterized by the statistics of local measurements performed on its subsystems. Standard QIT and $T_2^{\mathds{R}}$ satisfy this property, whereas $T_1^{\mathds{R}}$ and FIT do not. More precisely, in the latter two theories, there exist pairs of bipartite states $(\rho_{AB}, \sigma_{AB})$ that are operationally indistinguishable for two distant observers $A$ and $B$, each having access only to one subsystem, but that can nevertheless be distinguished by a single global observer with access to the joint system.

We show that, in general probabilistic theories (GPTs), operational independence coincides with independent preparation if and only if the theory is locally tomographic. Consequently, since $T_1^{\mathds{R}}$ and FIT violate local tomography~\cite{d2014feynman}, Postulate~\ref{Postulate:IndepPrepOpeIndep} does not hold in these cases.

A single system in a GPT is described by a real finite-dimensional vector space \(A\), together with a convex set of states \(S_A \subset A\) and a set of effects \(E_A \subset A^*\), where \(A^*\) denotes the dual space of \(A\).

Allowed states are vectors \(\omega \in S_A\), with normalized states satisfying \(u^A(\omega)=1\), where \(u^A \in E_A\) is the unique unit effect. The set \(E_A\) specifies the admissible effects. These are linear functionals \(e \in A^*\) such that, for all \(\omega \in S_A\), the probabilities \(e(\omega)\) are well-defined, i.e., \(0 \le e(\omega) \le 1\).

The composition of systems is described by a bilinear product \(\boxtimes\) acting on both states and effects. In particular, product states \(s_A \boxtimes s_B\) are valid states of the composite system, and independent measurements on independently prepared systems satisfy
\[
(e^A \boxtimes e^B)(s_A \boxtimes s_B)
= e^A(s_A)\, e^B(s_B),
\]
for all effects \(e^A \in E_A\) and \(e^B \in E_B\). The unit effect of the composite system is given by \(u^{AB} = u^A \boxtimes u^B\), where \(u^A \in E_A\) and \(u^B \in E_B\) are the respective unit effects.

We denote joint and marginal probabilities by
\[
P(a,b) := (e^A \boxtimes e^B)(s_{AB}), 
\quad
P(a) := e^A(s_A), 
\quad
P(b) := e^B(s_B),
\]
where the reduced (marginal) states \(s_A \in S_A\) and \(s_B \in S_B\) are defined by
\[
s_A := (\cdot \boxtimes u^B)(s_{AB}), 
\qquad
s_B := (u^A \boxtimes \cdot)(s_{AB}).
\]

Finally, any GPT is tomographic, i.e., if two states $\omega^A$ and $\nu^A$ satisfy
\[
e(\omega^A) = e(\nu^A) \quad \forall\, e \in E_A,
\]
then $\omega^A = \nu^A$.

Let us also recall the notion of local tomography, which is not a necessary property for all GPTs. A GPT is said to be \emph{locally tomographic} if global states are fully characterized by local measurements, i.e.,
\[
(e^A \boxtimes e^B)(s_{AB}) = (e^A \boxtimes e^B)(t_{AB})
\quad \forall\, e^A \in E_A,\; e^B \in E_B
\;\Rightarrow\;
s_{AB} = t_{AB}.
\]

\begin{proposition}[Local tomography implies the equivalence of operational independence and independent preparation]
\label{prop:LocalTomographyImpliesEquivalence}
Assume that the theory is locally tomographic, i.e., for all states \(s_{AB}, t_{AB} \in S_{AB}\),
\[
(e^A \boxtimes e^B)(s_{AB}) = (e^A \boxtimes e^B)(t_{AB})
\quad \forall\, e^A \in E_A,\; e^B \in E_B
\;\Rightarrow\;
s_{AB} = t_{AB}.
\]
Then, for any bipartite state \(s_{AB}\),
\[
s_{AB} = s_A \boxtimes s_B
\quad \Longleftrightarrow \quad
(e^A \boxtimes e^B)(s_{AB}) = e^A(s_A)\, e^B(s_B)
\]
for all effects \(e^A \in E_A\) and \(e^B \in E_B\).
\end{proposition}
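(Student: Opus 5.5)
The plan is to prove the two directions separately. The forward direction only uses the defining property of $\boxtimes$ on product states; local tomography is needed only for the converse.

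\emph{($\Rightarrow$)} Assume $s_{AB}=s_A\boxtimes s_B$. The composition rule then gives
\[
(e^A\boxtimes e^B)(s_{AB})=(e^A\boxtimes e^B)(s_A\boxtimes s_B)=e^A(s_A)\,e^B(s_B)
\]
for all effects. One consistency point should be recorded: the $s_A,s_B$ in the factorization coincide with the marginals $(\cdot\boxtimes u^B)(s_{AB})$ and $(u^A\boxtimes\cdot)(s_{AB})$. To see this, pair the $A$-marginal with an arbitrary $e^A$ and get $e^A(s_A)\,u^B(s_B)=e^A(s_A)$, using normalization. Single-system tomography then identifies the marginal with $s_A$, and the same argument works for $B$.

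\emph{($\Leftarrow$)} Assume $(e^A\boxtimes e^B)(s_{AB})=e^A(s_A)\,e^B(s_B)$ for all $e^A\in E_A$, $e^B\in E_B$, where $s_A,s_B$ are the marginals of $s_{AB}$. Take $t_{AB}:=s_A\boxtimes s_B$. This is a valid state, because the marginals are valid states in $S_A,S_B$ and product states of valid states are valid composite states. By the composition rule, $(e^A\boxtimes e^B)(t_{AB})=e^A(s_A)\,e^B(s_B)$. Hence $s_{AB}$ and $t_{AB}$ give identical statistics on every product effect, and local tomography yields $s_{AB}=t_{AB}=s_A\boxtimes s_B$.

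The only delicate step is checking that $t_{AB}$ is admissible and lies in $S_{AB}$, so that the locally tomographic hypothesis, stated for pairs of states in $S_{AB}$, applies to it. This requires that the marginals are genuine elements of $S_A,S_B$ and that $\boxtimes$ maps valid local states to valid joint states. I will take both as part of the GPT composition axioms recalled above; the rest is direct substitution.
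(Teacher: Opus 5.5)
Your proof is correct and follows the paper's argument exactly. The forward direction is the composition rule $(e^A\boxtimes e^B)(s_A\boxtimes s_B)=e^A(s_A)\,e^B(s_B)$, and the converse applies local tomography to $s_{AB}$ and $s_A\boxtimes s_B$. Your two extra checks are also sound: that the factors coincide with the marginals, and that $s_A\boxtimes s_B\in S_{AB}$. The paper leaves both implicit in its GPT setup.
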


\begin{proof}
\textbf{($\Rightarrow$)}  
Let \(s_{AB} = s_A \boxtimes s_B\). Then, by the defining property of composite systems in GPTs,
\[
(e^A \boxtimes e^B)(s_{AB})
= (e^A \boxtimes e^B)(s_A \boxtimes s_B)
= e^A(s_A)\, e^B(s_B).
\]

\medskip

\textbf{($\Leftarrow$)}  
Assume that for all effects \(e^A \in E_A\) and \(e^B \in E_B\),
\[
(e^A \boxtimes e^B)(s_{AB}) = e^A(s_A)\, e^B(s_B),
\]
where the marginals \(s_A \in S_A\) and \(s_B \in S_B\) are defined by
\[
s_A := (\cdot \boxtimes u^B)(s_{AB}), 
\qquad
s_B := (u^A \boxtimes \cdot)(s_{AB}).
\]

Then, for all \(e^A \in E_A, e^B \in E_B\),
\[
(e^A \boxtimes e^B)(s_{AB})
= e^A(s_A)\, e^B(s_B)
= (e^A \boxtimes e^B)(s_A \boxtimes s_B),
\]
where the last equality again follows from the defining property of composite systems in GPTs.

Hence,
\[
(e^A \boxtimes e^B)(s_{AB})
=
(e^A \boxtimes e^B)(s_A \boxtimes s_B)
\quad \forall\, e^A \in E_A,\, e^B \in E_B.
\]

By local tomography, this implies
\[
s_{AB} = s_A \boxtimes s_B.
\]
\end{proof}

\begin{proposition}[Equivalence of operational independence and independent preparation implies local tomography]
\label{prop:EquivalenceImpliesLocalTomography}
Assume that in a GPT the equivalence of operational independence and independent preparation holds, i.e., for any bipartite state \(s_{AB}\),
\[
s_{AB} = s_A \boxtimes s_B
\quad \Longleftrightarrow \quad
(e^A \boxtimes e^B)(s_{AB}) = e^A(s_A)\, e^B(s_B)
\]
for all effects \(e^A \in E_A\) and \(e^B \in E_B\).
Then the theory is locally tomographic.
\end{proposition}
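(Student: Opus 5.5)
The plan is to argue directly from the definition of local tomography. Fix two bipartite states $s_{AB},t_{AB}\in S_{AB}$ with $(e^A\boxtimes e^B)(s_{AB})=(e^A\boxtimes e^B)(t_{AB})$ for all $e^A\in E_A$, $e^B\in E_B$, and aim to show $s_{AB}=t_{AB}$. Two facts come for free. Taking $e^B=u^B$, then $e^A=u^A$, shows $s_A=t_A$ and $s_B=t_B$, because single systems are tomographic. Bilinearity of $\boxtimes$ makes the map from a state to its table of local statistics affine. So local indistinguishability is preserved under mixing with any common third state.

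\textbf{Easy case.} If $s_{AB}$ is operationally independent, then so is $t_{AB}$, since it has the same joint statistics and the same marginals. Applying the assumed equivalence to both gives $s_{AB}=s_A\boxtimes s_B=t_A\boxtimes t_B=t_{AB}$.

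\textbf{General case.} The idea is to reduce to the easy case by mixing. I will try to find $\lambda\in(0,1]$ and a state $r_{AB}\in S_{AB}$ such that $m_s:=\lambda s_{AB}+(1-\lambda)r_{AB}$ has product statistics. By affinity, $m_t:=\lambda t_{AB}+(1-\lambda)r_{AB}$ then has the same local statistics and the same marginals as $m_s$. The easy case then gives $m_s=m_t$, and cancelling $(1-\lambda)r_{AB}$ and dividing by $\lambda>0$ yields $s_{AB}=t_{AB}$.

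\textbf{Main obstacle.} The hard step is constructing $r_{AB}$. The statistics of $m_s$ must factorize, and that is a quadratic condition in the mixture because its marginals also move with $\lambda$. My first attempt is to take $r_{AB}$ from a family whose marginals are fixed, for instance $r_{AB}=\mu\, s_A\boxtimes s_B+(1-\mu)r'_{AB}$ with $r'_{AB}$ having the same marginals as $s_{AB}$. This reduces the problem to finding a valid state realizing the correlation term $2\,P_{s_A}P_{s_B}-P_{s}$ (suitably rescaled by $\lambda$). The argument would then go through by taking $\lambda$ small enough that the required statistics lie in the image of $S_{AB}$.

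If that fails in full GPT generality, I would fall back on a linear-algebra version of the argument. Let $\Delta:=s_{AB}-t_{AB}$, which lies in the kernel of all product effects. The aim is to show that $s_A\boxtimes s_B+\epsilon\Delta$ is a valid state for small $\epsilon$. This holds, for example, when $s_A\boxtimes s_B$ lies in the relative interior of $S_{AB}$, after first replacing $s_{AB},t_{AB}$ by mixtures with an interior product state. That state has product statistics, so the equivalence forces it to equal $s_A\boxtimes s_B$, hence $\Delta=0$. I expect establishing this interiority or closure condition to be the step needing the most care, and possibly an extra standard assumption on $S_{AB}$, such as a full-dimensional state cone.
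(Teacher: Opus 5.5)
Your reduction is correct as far as it goes. The easy case is right. Applying it to $m_s=\lambda s_{AB}+(1-\lambda)r_{AB}$ and $m_t=\lambda t_{AB}+(1-\lambda)r_{AB}$, then cancelling, does reduce local tomography to finding $\lambda\in(0,1]$ and $r_{AB}\in S_{AB}$ with $m_s$ operationally independent. But that existence step is the real content of the proposition, and you do not prove it. Write $L$ for the linear map sending a bipartite state to its table of local statistics $\bigl((e^A\boxtimes e^B)(\cdot)\bigr)_{e^A,e^B}$, and set $K:=L(S_{AB})$. Your phrase ``taking $\lambda$ small enough that the required statistics lie in the image of $S_{AB}$'' presupposes that $K$ extends slightly beyond $L(s_A\boxtimes s_B)$ in the direction $L(s_A\boxtimes s_B)-L(s_{AB})$. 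In effect, it assumes $L(s_A\boxtimes s_B)$ is a relative-interior point of $K$. Nothing in the proposal establishes this, and it is not evident when $s_A$ or $s_B$ lies on the boundary of its state space. The ansatz $r_{AB}=\mu\,s_A\boxtimes s_B+(1-\mu)r'_{AB}$ does not address it. Your fallback has the same hole (interiority of a product state in $S_{AB}$), and you leave open whether an extra assumption is needed. As written, this is a reduction rather than a proof. The paper's own proof is essentially your fallback. It uses the decomposition $A\boxtimes B=\mathrm{Span}\{\omega^A\boxtimes\nu^B\}\oplus H_S$ of Baldij\~ao et al., forms $s_A\boxtimes s_B+h$ with $h\in H_S$, and applies the hypothesis to get $h=0$. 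However, it certifies that $s_A\boxtimes s_B+h$ is a state only via $u^{AB}(s_A\boxtimes s_B+h)=1$. That does not suffice in a GPT: normalization does not give positivity on all effects, and positivity fails for large $h$. So your caution is justified, and the paper's argument needs the same interiority input.

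The missing idea is a short interiority lemma that needs no extra assumption. Take $a\in\mathrm{ri}\,S_A$, $b\in\mathrm{ri}\,S_B$ and $p:=a\boxtimes b$. Write $a=\sum_i\alpha_i a_i$ and $b=\sum_j\beta_j b_j$ with all $\alpha_i,\beta_j>0$, where $\{a_i\}$ and $\{b_j\}$ are bases of $A$ and $B$ made of states. Then $L(p)=\sum_{i,j}\alpha_i\beta_j\,L(a_i\boxtimes b_j)$ is a strictly positive combination of a basis of the space of local-statistics tables. Hence every nearby normalized table is still a convex combination of product-state tables, all of which lie in $K$, so $L(p)\in\mathrm{ri}\,K$. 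Now aim at $L(p)$ rather than at the product of the marginals of $s_{AB}$. Because $L(p)\in\mathrm{ri}\,K$, for small $\delta>0$ the point $q:=L(p)+\delta\,(L(p)-L(s_{AB}))$ lies in $K$. Choose $r_{AB}$ with $L(r_{AB})=q$ and set $\lambda:=\delta/(1+\delta)$. Then $L(m_s)=L(m_t)=L(p)$, so both mixtures are operationally independent with marginals $a,b$. The hypothesis gives $m_s=m_t=p$, and hence $s_{AB}=t_{AB}$. The same lemma closes your fallback. Since $\mathrm{ri}\,L(S_{AB})=L(\mathrm{ri}\,S_{AB})$, some $w\in\mathrm{ri}\,S_{AB}$ has $L(w)=L(p)$, and the hypothesis forces $w=p$. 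So $p\in\mathrm{ri}\,S_{AB}$, and $p+\epsilon(s_{AB}-t_{AB})$, an affine combination of states, is a state for small $\epsilon$. It is then operationally independent with marginals $a,b$, hence equal to $p$. This is also the repair for the paper's step: use $p+\epsilon h$ instead of $s_A\boxtimes s_B+h$.
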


\begin{proof}
From Lemmas 1 and 2 of \cite{baldijao2026tomographically}, any composite state space in a GPT admits a decomposition
\[
A \boxtimes B = \operatorname{Span}\bigl\{\omega^A \boxtimes \nu^B \,\big|\, \omega^A \in S_A,\ \nu^B \in S_B\bigr\} \oplus H_S,
\]
where \(H_S\) is the so-called holistic subspace. By definition, the holistic space is invisible to local measurements, meaning any \(h \in H_S\) satisfies
\[
(e^{A} \boxtimes e^{B})(h)=0
\quad \forall\, e^A \in E_A,\; e^B \in E_B.
\]
Moreover, the theory is locally tomographic if and only if \(H_S = \{0\}\). 

For any \(h \in H_S\), consider a product state \(s := s_A \boxtimes s_B\) and define \(\tilde{s} := s + h\). We first show that \(\tilde{s}\) is a valid state. Since valid states are normalized elements of \(A \boxtimes B\), we compute
\[
u^{AB}(\tilde{s}) = u^{AB}(s)+u^{AB}(h)
= 1 + (u^{A} \boxtimes u^{B})(h) = 1,
\]
where we used the fact \(s\) is normalized and \(h\) is invisible to all local effects, including the unit effects. Hence, \(\tilde{s}\) is normalized and thus a valid state.

Moreover, for all local effects, \(s\) and \(\tilde{s}\) yield identical statistics:
\begin{equation}
    \label{eq:proofofprop2}
    (e^A \boxtimes e^B)(\tilde{s})=(e^A \boxtimes e^B)(s)= e^A(s_A)\, e^B(s_B),
\end{equation}
where the last equality follows from the fact that \(s\) is a product state.

Setting \(e^B = u^B\) yields
\[
e^A(\tilde{s}_A)= e^A(s_A)  \quad \forall\, e^A \in E_A,
\]
where \(\tilde{s}_A := (\cdot \boxtimes u^B)(\tilde{s})\) is the marginal of \(\tilde{s}\) on system \(A\). Similarly, setting \(e^A = u^A\) gives
\[
e^B(\tilde{s}_B) = e^B(s_B) \quad \forall\, e^B \in E_B,
\]
with \(\tilde{s}_B := (u^A \boxtimes \cdot)(\tilde{s})\). Since GPTs are tomographic,
\[
s_A = \tilde{s}_A, \qquad s_B = \tilde{s}_B.
\]

Substituting back in~\eqref{eq:proofofprop2}, we obtain the operational independence,
\[
(e^A \boxtimes e^B)(\tilde{s})
=
e^A(\tilde{s}_A)\, e^B(\tilde{s}_B),
\]
which, by the assumption, implies
\[
\tilde{s} = \tilde{s}_A \boxtimes \tilde{s}_B = s_A \boxtimes s_B = s,
\]
and hence \(h = 0\).

Therefore, every \(h \in H_S\) must be zero, so \(H_S = \{0\}\). This proves that the theory is locally tomographic.
\end{proof}

\end{document}